\newtheorem{thm}{Theorem}
\newtheorem{defi}{Definition}
\newcounter{examplecounter}
\definecolor{lg}{rgb}{0.9, 0.9, 0.9}
\slshape\color{listingString},
\itshape\color{listingComment},
\lstdefinelanguage{lustre}
    { morekeywords={
        imported, node, 
        bool, int, float,
        initial, let, tel, until, unless, type, var, when, whennot,
        match, if, then, else, state, do, done, resume, restart, returns, merge,
        pre, current, last, map, red, fill, default,
        fby, automaton, tail, implies, pre, assert, rule, query, var, declare, rel,
        PROPERTY, or, and, requires, ensures, observer, rule, query, Bool, Int, ite, not},
      morecomment=[l][\color{blue}]{--},
      morecomment=[s][\color{blue}]{(*}{*)},
      morecomment=[l][\color{blue}]{assert}
    }[keywords,comments]
\newcommand{\In}{\mathcal{I}}
\newcommand{\Ou}{\mathcal{O}}
\newcommand{\Lo}{\mathcal{L}}
\newcommand{\lett}{\mathbf{let}}
\newcommand{\inn}{\mathbf{in}}
\definecolor{listingComment}{rgb}{0.3,0.7,0.3}
\definecolor{listingString}{rgb}{0.3,0.3,0.7}
\definecolor{listingBackground}{rgb}{0.95,0.95,0.95}
\begin{document}

\title{Synthesizing Modular Invariants for Synchronous Code%
  \thanks{\tiny This work was partially supported by the ANR-INSE-2012
    CAFEIN project, and also by NASA Contract No. NNX14AI09G.
Copyright 2014 Carnegie Mellon University.
This material is based upon work funded and supported by the
Department of Defense under Contract No. FA8721-05-C-0003 with
Carnegie Mellon University for the operation of the Software
Engineering Institute, a federally funded research and development
center.
Any opinions, findings and conclusions or recommendations expressed in
this material are those of the author(s) and do not necessarily
reflect the views of the United States Department of Defense.
NO WARRANTY. THIS CARNEGIE MELLON UNIVERSITY AND SOFTWARE ENGINEERING
INSTITUTE MATERIAL IS FURNISHED ON AN “AS-IS” BASIS. CARNEGIE MELLON
UNIVERSITY MAKES NO WARRANTIES OF ANY KIND, EITHER EXPRESSED OR
IMPLIED, AS TO ANY MATTER INCLUDING, BUT NOT LIMITED TO, WARRANTY OF
FITNESS FOR PURPOSE OR MERCHANTABILITY, EXCLUSIVITY, OR RESULTS
OBTAINED FROM USE OF THE MATERIAL. CARNEGIE MELLON UNIVERSITY DOES NOT
MAKE ANY WARRANTY OF ANY KIND WITH RESPECT TO FREEDOM FROM PATENT,
TRADEMARK, OR COPYRIGHT INFRINGEMENT.
This material has been approved for public release and unlimited distribution.
DM-0001278.
}}

\author{Pierre-Lo\"ic Garoche
\institute{Onera, The French Aerospace Lab}
\and
Arie Gurfinkel
\institute{SEI / CMU}
\and
Temesghen Kahsai
\institute{NASA Ames / CMU}
}
\def\titlerunning{Synthesizing Modular Invariants for Synchronous Code}
\def\authorrunning{P-L. Garoche , A. Gurfinkel \& T. Kahsai}

\maketitle

\begin{abstract}
In this paper, we explore different techniques to synthesize modular invariants for synchronous code encoded as Horn clauses. Modular invariants are a set of formulas that characterizes the validity of predicates. They are very useful for different aspects of analysis, synthesis, testing and program transformation. 
We describe two techniques to generate modular invariants for code written in the synchronous dataflow language Lustre. The first technique directly encodes the synchronous code in a modular fashion. While in the second technique, we synthesize modular invariants starting from a monolithic invariant. Both techniques, take advantage of analysis techniques based on property-directed reachability. We also describe a technique to minimize the synthesized invariants.

\end{abstract}

\section{Introduction}\label{sec:intro}

In this paper, we present an algorithm for synthesizing modular
invariants for synchronous programs. Modular invariants are useful for
different aspects of analysis, synthesis, testing and program
transformation. For instance, embedded systems often contains complex
modal behavior that describe how the system interacts with its
environment. Such modal behaviors are usually described via
hierarchical state machines (HSM). The latter are used in model-based
development notations such as Simulink and SCADE --- the de-facto
standard for software development in avionics and many other
industries. For the purpose of safety analysis, Simulink/SCADE models
are compiled to a lower level modeling language, usually a synchronous
dataflow language such as Lustre~\cite{DBLP:conf/popl/CaspiPHP87}. Preserving the original
(hierarchical and modular) structure of the model is paramount to the
success of the analysis process. In this paper, we illustrate a
technique to preserve such structure via a modular compilation
process. Specifically, our techniques consists of compiling in a
modular fashion Lustre programs into Horn clauses.

The use of \emph{Horn clauses} as intermediate representation for
verification was proposed in~\cite{GuptaPR11}, with the verification
of concurrent programs as the main application. The underlying
procedure for solving sets of recursion-free Horn clauses, over the
combined theory of Linear Rational Arithmetic (LRA) and Uninterpreted
Functions (UF), was presented in~\cite{GuptaPR11b}. A range of further
applications of Horn clauses includes inter-procedural/exchange format
for verification problems that is supported by the SMT solver
Z3~\cite{pdr}. In this paper, we show how to use such techniques to
generate modular invariants for Lustre programs.

While on one hand we generate modular invariants by encoding the
synchronous code in a modular fashion, on the other hand, we are
interested to synthesize modular invariants from a monolithic
invariant. That is, given an invariant for a program that is obtained
by flattening the hierarchical structure, we want to reconstruct the
modular invariant. We describe a technique that generates modular
invariants from a monolithic one.

Finally, once we obtain modular invariants, we are interested in
minimizing such invariants. In particular, in our setting the
invariants correspond to a contract that a component must satisfy. In
general, the contract is state-full, i.e., it is an automaton (or a
protocol). In practice, it is important to generate contracts with
minimal state (i.e., minimal automata).  We sketch a direction for
minimization based on a CEGAR-like~\cite{cegar} technique. 

In summary, this paper makes the following contributions:
\begin{itemize}
\item Two techniques to generate modular invariants for synchronous
  code. One based on a modular compilation of Lustre code into Horn
  clauses, and a second one based on extracting modular invariants
  from monolithic invariants.
\item An implementation of these techniques that targets programs written in the synchronous dataflow language Lustre.
\item A sketch of a technique to minimize invariants using CEGAR-type
  approach.
\end{itemize}

The rest of the paper is organized as follows. In the next section, we
introduce synchronous languages in general and the synchronous
dataflow language Lustre. In Section~\ref{sec:modular}, we describe the first technique a
procedure to compile in a modular fashion Lustre code into Horn
clauses. In Section~\ref{sec:mono}, we illustrate the second technique to
derive modular invariants from a monolithic one. We conclude the paper
in Section~\ref{sec:min} with a discussion on the minimization of
invariants.










\section{Preliminaries}\label{sec:prelim}
Synchronous languages are a class of languages proposed for the design of so
called ``reactive systems'' -- systems that maintain a permanent interaction
with physical environment. Such languages are based on the theory of synchronous
time, in which the system and its environment are considered to both view time
with some ``abstract'' universal clock. In order to simplify reasoning about
such systems, outputs are usually considered to be calculated
instantly~\cite{Benveniste91thesynchronous}. Examples of such languages include
Esterel~\cite{esterel}, Signal~\cite{signal} and
Lustre~\cite{DBLP:conf/popl/CaspiPHP87}. In this paper, we will concentrate on the
latter. Lustre combines each data stream with an associated clock as a mean to
discretize time. The overall system is considered to have a universal clock
that represents the smallest time span the system is able to distinguish, with
additional, coarser-grained, user-defined clocks. Therefore the overall system
may have different subsections that react to inputs at different frequencies. At
each clock tick, the system is considered to evaluate all streams, so all values
are considered stable for any actual time spent in the instant between ticks. A
stream position can be used to indicate a specific value of a stream in a given
instant, indexed by its clock tick. A stream at position $0$ is in its initial
configuration. Positions prior to this have no defined stream value. A Lustre
program defines a set of equations of the form:
$$
y_1, \dots, y_n = f(x_1,\dots, x_m, u_1,\dots, u_o)
$$
where $y_i$ are output or local variables and $u_i$ are input variables. Variables in Lustre are
used to represent individual streams and they are typed, with basic types
including streams of \textit{Real} numbers, \textit{Integers}, and
\textit{Booleans}. Lustre programs and subprograms are expressed in terms of
\textit{Nodes}. Nodes directly model subsystems in a modular fashion, with an
externally visible set of inputs and outputs. A \textit{node} can be seen as a
mapping of a finite set of input streams (in the form of a tuple) to a finite
set of output streams (also expressed as a tuple). The \textit{top node} is the main node of the program, the one that interface with the environment of the program and never be called by another node.

At each instant $t$, the node takes in the values of its input streams and returns the values of its output
streams. Operationally, a node has a cyclic behavior: at each cycle $t$, it
takes as input the value of each input stream at position or instant $t$, and
returns the value of each output stream at instant $t$. This computation is
assumed to be immediate in the computation model. Lustre nodes have a
limited form of memory in that, when computing the output values they can also
look at input and output values from previous instants, up to a finite limit
statically determined by the program itself. Figure~\ref{fig:lustre_synt}
describes a simple Lustre program: a node that every four computation steps
activates its output signal, starting at the third step. The \texttt{reset} input
reinitializes this counter.

\begin{figure}[!h]
\centering
\begin{lustre}
                         node counter(reset: bool) returns (active: bool);
                           var a, b: bool;
                           let
                             a = false -> (not reset and not (pre b));
                             b = false -> (not reset and pre a);
                             active = a and b;
                           tel
\end{lustre}
\caption{A simple Lustre example.}
\label{fig:lustre_synt}
\end{figure}

Typically, the body of a Lustre node consists in a set of definitions, stream
equations of the form $x = t$ (as seen in Figure~\ref{fig:lustre_synt}) where
$x$ is a variable denoting an output or a locally defined stream and $t$ is an
expression, in a certain stream algebra, whose variables name input, 
output, or local streams. More generally, $x$ can be a tuple of stream variables 
and $t$ an expression evaluating to a tuple of the same type. Most of Lustre's operators are point-wise lifting to streams of the usual
operators over stream values. For example, let $x = [x_0, x_1, \dots]$ and $y =
[y_0, y_1, \dots]$ be two integer streams. Then, $x + y$ denotes the stream
$[x_0 + y_0; x_1 + y_1, \dots]$; an integer constant $c$, denotes the constant
integer stream $[c,c, \dots]$. Two important additional operators are a unary
shift-right operator \textit{pre} (``previous''), and a binary initialization
operator $\rightarrow$ (``followed by"). The first is defined as $pre(x) = [u,
x_0, x_1, \dots]$ with the value $u$ left unspecified. The second is defined as
$x \rightarrow y = [x_0, y_1, y_2, \dots]$. Syntactical
restrictions on the equations in a Lustre program guarantee that all its streams are
well defined: e.g. forbidding recursive definitions hence avoiding algebraic loops.



\section{Modular synthesis}\label{sec:modular}

In the last section we gave an informal overview of the synchrounous dataflow language Lustre. A formal
semantics of Lustre is described in~\cite{DBLP:conf/popl/CaspiPHP87}. In this section, we
describe our technique to generate modular Horn clauses starting from Lustre code.

A Lustre program $L$ is a collection of nodes $[N_0, N_1, \dots, N_m]$
where $N_0$ is the top node, i.e., the main function. Each node is
represented by the following tuple:
$$
N_i = (\In_i, \Ou_i, \Lo_i, Init_i, Trans_i)
$$
where $\In_i, \Ou_i$ and $\Lo_i$ are set of input, output and local
variables. $Init_i$ and $Trans_i$ represents the set of formulas for the initial
states and the transition relation respectively, and they are defined as
follows:
\[
\bigwedge_{i \in \mathbb{N}} v_i = \rho (s_i)
\]
where
\begin{itemize}
\item $v_i \in \Ou_i \cup \Lo_i$ and $s_i$ is the expression such that
  $Vars(s_i) \subseteq \Ou_i \cup \In_i\cup \Lo_i$. $Vars(s_i)$ is the set of
  variables in $s_i$;
\item expressions $s_i$ are arbitrary Lustre expression including node calls
  $N_j(u_1,\ldots, u_n)$;
\item $\rho$ function maps expression to expression and projects the binary initialization operators
  $\rightarrow$:  \[
  a \rightarrow b\ \textrm{is projected as}\ \left\{
    \begin{array}{l}
      a\ \textrm{in}\ Init_i\\b\ \textrm{in}\ Trans_i
\end{array}
\right.
  \]
\end{itemize}
%
%
Given a Lustre program $L=[N_0, N_1, \dots, N_m]$, a safety property $P$
is any expression over the signature over the main node $N_0$. A common
way to express safety property in synchronous languages is the use of
synchronous observers~\cite{Rushby:2012}. The
latter is a wrapper used to test observable properties of a node $N$ with
minimal modification the node itself; it returns an error signal if the property
does not hold, reducing the more complicated property to a single Boolean stream
where we need to check if the stream is constantly true.

We now describe the compiler $\mathtt{lus2horn}: L \rightarrow H$, which given a Lustre
program $L=[N_0, N_1, \dots, N_m]$ generates a set of Horn clauses $H$ that are semantically equivalent to $L$. The current compiler only handles a simplified version of the
Lustre v4 language without the constructs to manipulate clocks, or complex data
structure. The following steps describe the various stages of $\mathtt{lus2horn}$:

\textbf{Normalization:}
In the first phase the compiler $lus2horn$ transforms the equations of the
Lustre node to extract the stateful computations that appear inside
expressions. Stateful computation can either be the explicit use of a
\textit{pre} construct or the call to another node which may be stateful. The
extraction is made through a linear traversal of the node's equations,
introducing new equations for stateful computation\footnote{As opposed to 3
  addresses code, only the stateful part of expression is extracted.}. When possible, tuple
definition are split as simpler definitions. To ease later computation, each
node call is labeled by a unique identifier. The following set of expressions give an example of such normalization:

\[
\begin{array}{lcl}
  a\ =\ false \rightarrow (\mathbf{not}\ reset\ \mathbf{and\ not}\ (\mathbf{pre}\ b));
  &\longrightarrow& \left\{ \begin{array}{l}pb\ =\ \mathbf{pre}\ b;\\
a\ =\ false \rightarrow (\mathbf{not}\ reset\ \mathbf{and\ not}\ pb);
   \end{array} \right.\\\\

  y\ =\ 3 + node(x, 2);
  &\longrightarrow& \left\{ \begin{array}{l}\textrm{$res\_node1 = node^{uid_1}(x,
        2);$}\\ y\ =\ 3 + res\_node1;\end{array} \right.\\\\
\end{array}
\]
The following definitions are the normalization functions $Norm_{N}$, $Norm_{Eq}$ and $Norm_{Expr}$, a single node $N$, an equations and an expression respectively.
\begin{itemize}
\item The normalization of an expression returns a modified expression along
  with a set of newly bound stateful expressions and associated new variables:
$$
\begin{small}
\begin{array}{l}
  Norm_{Expr} (e, Eqs, Vars) \triangleq \\
  \begin{array}{lcl}
    v & \rightarrow & v, Eqs, Vars\\
    cst &\rightarrow & cst, Eqs, Vars\\
   \mathbf{op}(e_1, \ldots , e_n) &\rightarrow &
    \left\{\begin{array}{l}
      \lett\ e_1, Eqs, Vars = Norm_{Expr}(e_1, Eqs, Vars)\ \inn\\
        \vdots \\
        \lett\ e_n, Eqs, Vars = Norm_{Expr}(e_n,
          Eqs, Vars)\ \inn\\
          \quad \quad op(e_1', \ldots, e_n'), Eqs, Vars
        \end{array}\right.
        \\
        \mathbf{pre}\ e &\rightarrow & \left\{
        \begin{array}{l}
          \lett\ e', Eqs, Vars = Norm_{Expr} (e, Eqs,
          Vars)\ \inn\\
          \lett\ x \notin Vars\ \inn \\
          \quad \quad x, \{x = pre e';\} \cup Eqs, \{x\} \cup Vars
        \end{array}\right.
        \\
    N_i(e_1, \ldots , e_n) &\rightarrow &
    \left\{\begin{array}{l}
      \lett\ e_1, Eqs, Vars = Norm_{Expr}(e_1, Eqs, Vars)\ \inn\\
        \vdots \\
        \lett\ e_n, Eqs, Vars = Norm_{Expr}(e_n,
          Eqs, Vars)\ \inn\\
          \lett\ x \notin Vars\ \inn \\
          \quad \quad x, \{x = N_i^{uid}(e_1', \ldots, e_n');\} \cup Eqs, \{x\} \cup Vars

        \end{array}\right.
      \end{array}
\end{array}
\end{small}
$$

In $Norm_{expr}$, $op$ is a Lustre operator, $N_i$ is a node in a Lustre program $L$ and $uid$ is a unique
  identifier associated to the call of $N_i$ with arguments $(e_1', \ldots,
  e_n')$.

\item Normalization of a node equation simplifies tuple definitions and
  normalizes each expression. It returns a set of equations with normalized expressions:
\[
\begin{small}
\begin{array}{l}
Norm_{Eq} (\{v_1, \ldots, v_n = s\}, Eqs, Vars) \triangleq\\
\begin{array}{lcl}
v_1, \ldots, v_n = s_1, \ldots, s_n & \rightarrow &
\left\{\begin{array}{l}
\lett\ Eqs, Vars = Norm_{Eq} (\{v_1 = s_1\}, Eqs, Vars)\ \inn\\
\vdots\\
\quad \quad Norm_{Eq} (\{v_n = s_n\}, Eqs, Vars)
\end{array}\right.\\
v = s & \rightarrow &
\left\{\begin{array}{l}
\lett\ s', Eqs, Vars = Norm_{Expr} (e, Eqs, Vars)\ \inn\\
\quad \quad \{ v = s' \} \cup Eqs, Vars
\end{array}\right.
\end{array}
\end{array}
\end{small}
\]

\item Last the normalization of a node amounts to normalize each expression in
  each definition; the newly bound variables are added to the set of local variables.
\[
\begin{small}
\begin{array}{l}
Norm_N(N) = (\In_i,\ \Ou_i,\ \Lo_i \cup NewVars,\ {Init}_i,\ {Trans}_i)\\
\textrm{where}\ \left\{
\begin{array}{l}
\lett\ InitVars = \In_i \cup \Ou_i \cup \Lo_i\ \inn\\
\lett\ {Init_N}_i, Vars = Norm_{Eq} (Init_i, InitVars) \ \inn\\
\lett\ {Trans_N}_i, Vars = Norm_{Eq} (Trans_i, Vars) \ \inn\\
\quad \quad NewVars = Vars \setminus InitVars
\end{array}\right.
\end{array}
\end{small}
\]

\end{itemize}

  \textbf{State computation:} The state of a node is characterized by its
  memories: variables defined by \textit{pre} constructs, as well as the
  memories associated to each of its calling node instances.

We first define the set of local memories for a node:
\[
xMem (N_i) 
= \{ v \in \Lo_i \mid
\{v = pre\ e;\} \in Trans_i \}
\]

Then we characterize the set of callee instances, using their unique identifies $uid$:

\[
Inst (N_i) 
= \{ (N_j, uid)  \mid
\{v = N_j^{uid} (e_1, \ldots, e_n);\} \in {Trans_N}_i \}
\]

We denote by $State_i$ the set of memories fully characterizing the
state of a $N_i$ node instance\footnote{By construction, circular definition of nodes
are forbidden in Lustre: this recursive definition is well-founded.}.


\[
State(N_i) = Mem(N_i) \cup \{ uid_{N_j}\_v \mid (N_j, uid) \in Inst(N_i) \land v \in
State(N_j) \}
\]

\textbf{Generating Horn predicates:} Once the memories of a node have been
identified, a predicate describing the transition relation can be expressed as a
Horn clause. The latter is defined over inputs, outputs, previous value of the
node's state $State_i$ and updated state $State'_i$. We then produce the
following Horn rule encoding the transition relation predicate:

\[
\begin{array}{lcr}
(i) & \quad\quad\quad & (\mathtt{rule}\ (\Rightarrow\ \phi(Trans_i)\ (T_{N_i}\ \In_i\ \Ou_i\ State_i\ State'_i)))
\end{array}
\]

Here, we use the Horn clause format introduced in Z3~\cite{pdr}, where
$(\mathtt{rule}\ expr)$ universally quantify the free variables of the SMT-LIB
expression $expr$.
The function $\phi (expr)$ is recursively defined as

\[
\begin{small}  \begin{array}{lcl}
    \phi(v = pre\ e;) & \rightarrow & (= v'\ \ e)\\
    \phi(v_1, \ldots, v_n = N_j^{uid}(e_1, \ldots, e_m);) &\rightarrow &
    (T_{N_j}\ e_1\ \ldots\ e_m\ v_1\ \ldots\ v_n\ uid_{N_j}\_v_1\ \ldots\ uid_{N_j}\_v_k\
    uid_{N_j}\_v'_1\ \ldots\ uid_{N_j}\_v'_k)\\
   \phi (v = e;) &\rightarrow & (= v\ \ e)\\
   \phi( eq; eqs ) & \rightarrow & (\mathbf{and}\ (\phi(eq))\ (\phi (eqs))
\end{array}
\end{small}
\]
where
\begin{itemize}
\item $v$ in $v = pre\ e$ is by construction in $Mem(N_i) \subseteq State_i$ and
  $v' \in State'_i$;
\item $(uid_{N_j}\_v_l)_{1 \leq l \leq k} \in State_i$ denotes the state representation of
    the instance $uid$ of node $N_j$ and\\ $(uid_{N_j}\_v'_l)_{1 \leq l \leq k} \in State'_i$.
\end{itemize}

Similarly the Horn rule encoding the initial state is defined as follows:
\[
\begin{array}{lcr}
(ii) & \quad\quad\quad & (\mathtt{rule}\ (\Rightarrow\ \phi(Init_i)\ (I_{N_i}\ \In_i\ \Ou_i\ State'_i)))
\end{array}
\]

Given a Horn clause $H$ of the form $(\mathtt{rule}\ \Rightarrow\ Body\ B)$,
where $Body$ is a conjunction of expression and $B$ is a predicate, a model $\pi
: B \mapsto \mathcal{F}$ is a mapping from $B$ to a set of formulas
$\mathcal{F}$ such that it makes every rule $H$ valid. In other words, $\pi$
represents the set of invariants for the Horn clause $H$.

Let $(Main\ \In_{N_0}\ \Ou_{N_0}\ State_{N_0})$ be the predicate encoding the collecting semantics of
the main node $N_0$ of the Lustre program $L=[N_0, N_1, \dots, N_m]$. Each node
$N_i$ being defined by the two Horn clauses $I_{N_i}$ and $T_{N_i}$. The
semantics of the whole Lustre program is inductively encoded as follows:

$$
\begin{array}{l}
  (iii)\ (\mathtt{rule}\ (\Rightarrow (I_{N_0}\ \In_{N_0}\ \Ou_{N_0}\ State_{N_0})\ (Main\ \In_{N_0}\ \Ou_{N_0}\ State_{N_0})) \\
  (iv)\ (\mathtt{rule}\ (\Rightarrow (\mathbf{and}\ (T_{N_0} \In'_{N_0}\
  \Ou'_{N_0}\ State_{N_0}\ State_{N_0}')\ (Main\ \In_{N_0}\ \Ou_{N_0}\ State_{N_0}))\ (Main\ \In'_{N_0}\ \Ou'_{N_0}\ State_{N_0}'))
\end{array}
$$

$(iii)$ characterizes the set of initial states while $(iv)$ defines the induction
step. Let $P_L$ be the expression representing the safety property for the
Lustre program $L=[N_0, N_1, \dots, N_m]$. As specified above, $P_L$ is a
predicate defined over the signature of the main node $N_0$. Let $P_H$ be its
equivalent in Horn clauses format, i.e. $P_H = \phi(P_L)$.  Then, we encode the
checking of the property $P_H$ on the Horn encoding as defined above in the
following manner:

\[
\begin{array}{lll}
(v) & (\mathtt{rule}\ \Rightarrow & (\mathbf{and}\ (Main\ \In_{N_0}\ \Ou_{N_0}\ State_{N_0})\
(\mathbf{not}\ P_H))\ Error)
\end{array}
\]

where $Error$ is the predicate marking the error state. Such state is reachable
if and only if the property is not valid. If its unreachable then the property is valid. Tools like Z3~\cite{pdr} are able to give a certificate for (un)reachability. A certificate of reachability is in a nutshell a proof of unsatisfiability. In this case the certificate is presented as a trace. A certificate for un-reachability is in a nutshell a model for the recursive predicates. The following theorem establishes a
correspondence relation between Lustre program and the compiled Horn clauses $H$
using the function \texttt{lus2horn}.

\begin{thm} The semantics of the Lustre program $L=[N_0,\dots,N_m]$ and the
  semantics of its Horn clauses encoding $H=\mathtt{lus2horn}(L)$ are in strong bisimulation.
\end{thm}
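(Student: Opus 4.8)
The plan is to exhibit an explicit relation between the configurations of the two systems and to verify the two halves of the strong bisimulation game. First I would make precise the two transition systems being compared. On the Lustre side I take the operational (synchronous) semantics: a configuration is a valuation of the recursively-defined memory set $State(N_0)$ together with a flag recording whether the current instant is the initial one; a transition consumes an input valuation over $\In_{N_0}$, evaluates every stream equation of $N_0$ (resolving $\rightarrow$ according to the flag and dereferencing each nested instance through its own memories), and yields an output valuation over $\Ou_{N_0}$ and a successor memory valuation. On the Horn side, the relevant transition system is the one induced by rules $(iii)$ and $(iv)$ on the least model of $Main$: the reachable arguments of $Main$ are exactly the states produced from $I_{N_0}$ and closed under $T_{N_0}$. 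Strong bisimulation then means a relation $R$ between the two configuration spaces that matches initial states and is preserved, in both directions, by single synchronous steps carrying the same input/output labels.

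Second, before relating the two systems I would establish two semantic-preservation lemmas about the compiler pipeline. The first states that $Norm_N$ does not change the input/output/state relation denoted by a node: every fresh variable it introduces is functionally determined by the original streams (it merely names a $\mathbf{pre}$ subexpression or a node-call result), so the original memories and the normalized memories stand in a definitional bijection. The second lemma is the correctness of $\phi$: for a fixed node $N_i$, the formula $\phi(Trans_i)$ holds of a tuple $(\In_i,\Ou_i,State_i,State'_i)$ if and only if that tuple is a legal one-step transition of the normalized node, and dually $\phi(Init_i)$ characterizes the legal first instants. Both are proved by structural induction on the equation list, with the three base clauses of $\phi$ handled directly; the only subtle clause is the node-call case, where $\phi$ emits a nested $T_{N_j}$ atom and wires the callee's state components $uid_{N_j}\_v_l$ into $State_i$. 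Here the induction is on the node-nesting order, which is well-founded precisely because circular node definitions are forbidden in Lustre.

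Third, I would define $R$ to relate a Lustre configuration to the $Main$-argument obtained by reading off the same memory valuation under the definitional bijection of the normalization lemma, with the initial-instant flag matched against membership in the $I_{N_0}$-image. The base case of the bisimulation is immediate from rule $(iii)$ and $\phi$-correctness for $Init$: the Lustre initial configurations and the $Main$-states licensed by $I_{N_0}$ coincide. For the step case, a Lustre transition on input $\mathbf{i}$ producing output $\mathbf{o}$ and successor state $q'$ corresponds, via $\phi$-correctness for $Trans$, to the satisfaction of $T_{N_0}(\mathbf{i},\mathbf{o},q,q')$; rule $(iv)$ then propagates $Main$ from $q$ to $q'$, giving the forward direction, and the converse follows because $(iv)$ is the only clause with $Main$ in its body, so every derivation of a successor $Main$-fact factors through a satisfying assignment of $T_{N_0}$ and hence through a genuine Lustre transition. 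The role played by the $\rho$-projection of $\rightarrow$ is exactly to make the first-instant valuations agree with $I_{N_0}$ and all later ones with $T_{N_0}$, which is what lets the single relation $R$ serve both the base and step cases.

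Finally, I expect the main obstacle to be the nested-call clause of the $\phi$-correctness lemma, i.e. showing that the flattened state $State(N_i)=Mem(N_i)\cup\{uid_{N_j}\_v\}$ together with the conjoined $T_{N_j}$ atoms faithfully simulates the modular evaluation of a node that invokes several (possibly repeated) instances of other nodes. The delicate points are keeping the per-instance copies of callee memories disjoint through their unique identifiers $uid$, ensuring the outer successor $State'_i$ threads each inner $State'_j$ correctly, and invoking the induction hypothesis at the right point of the well-founded nesting order. Once this clause is discharged, the two bisimulation conditions are routine consequences of the two preservation lemmas.
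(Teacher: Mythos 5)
Your proposal is correct and follows essentially the same approach as the paper: the paper's own proof is only a two-line sketch stating that the initial states coincide and that every transition on either side is matched by one on the other, which is exactly the bisimulation game you verify. Your additional machinery (the normalization-preservation lemma, the $\phi$-correctness lemma proved by induction on the well-founded node-nesting order, and the explicit relation $R$) is a faithful and sound elaboration of the details the paper leaves implicit.
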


\begin{proof}A classical strong bisimulation proof only sketched here: the two set of initial
  states coincide while each transition that could be performed on one side is
  computable on the other: (i) for every execution of $L$ there is a derivation of $H$, and (ii) for every derivation of $H$ there is an equivalent execution of $L$.

\end{proof}

\subsection{Example}\label{ssec:example}
As an example of the compilation process \texttt{lus2horn}, we will consider a simple
Lustre program that compares two implementations of a 2-bit counter: a low-level
Boolean implementation and a higher-level implementation using integers. The
left hand side of Figure~\ref{fig:two} illustrate the Lustre code.  The
\texttt{greycounter} node internally repeats the sequence $ab = \{00, 01, 11,
10,00, . . .\}$ indefinitely, while the \texttt{integercounter} node repeats the
sequence $time = \{0, 1, 2, 3, 0, . . .\}$. In both cases the counter returns a
boolean value that is true if and only if the counter is in its third step or input variable $reset$ is true. The top node test is an example of a synchronous
observer. So we wish to verify the safety property that both implementations
have the same observable behavior, i.e. that the stream $OK$ is always true. On
the right hand side of Figure~\ref{fig:two} is the corresponding Horn clauses
encoding~\footnote{The variable \textit{time} on the Lustre program is denoted by the variable \textit{t} in the Horn encoded.}. The predicates $IC, GC$ and $T$ encode the transition relation of the
nodes \texttt{intcounter}, \texttt{greycounter} and \texttt{top}
respectively. While the predicate $IC\_Init, GC\_Init$ and $T\_Init$ encode the initial states of the three nodes. The predicate $M$ encode the main entry point of the two counters
Lustre program; while $Error$ is a predicate used to mark the error states.

\begin{figure}[!h]
\begin{minipage}{\linewidth}
\centering
\begin{minipage}{0.4\linewidth}
\begin{lustre}
node greycounter (reset:bool)
    returns (out:bool);
var a,b:bool;
let
 a = false $\rightarrow$
       (not reset and not pre(b));
 b = false $\rightarrow$ (not reset and pre(a));
 out = a and b;
tel

node intcounter (reset:bool)
     returns (out:bool);
var time: int;
let
 time = 0 $\rightarrow$
      if reset or pre(time) = 3
                 then 0
            else pre time + 1;
  out = (time = 2);
tel

node top (reset:bool)
     returns (OK:bool);
var b,d:bool;
let
  b = greycounter(reset);
  d = intcounter(reset);
  OK = b = d;
  --!PROPERTY : OK=true;
tel
\end{lustre}
\end{minipage}
\hspace{0.01\linewidth}
      \begin{minipage}{0.57\linewidth}
\begin{lustre}
(declare-rel GC(Bool Bool Bool Bool Bool Bool))
(declare-rel IC(Bool Bool Int Int))
; ... predicate declartions ...
(declare-rel Error())
; ... variable declarations ... ;

(rule ($\Rightarrow$ (and (= a' (and (not reset) (not b)))
               (= b' (and (not reset) a))
               (= out (and b' a')))
      (GC reset out a b a' b')))
(rule ($\Rightarrow$ (and (= a' false)
               (= b' false)
               (= out (and b' a'))
      (GC_Init reset out a' b')))
(rule ($\Rightarrow$ (and (= t'
            (ite (or reset (= t 3)) 0 (+ t 1)))
            (= out (= t' 2)))
      (IC reset out t t')))
(rule ($\Rightarrow$ (and (= t' 0) (= out (= t' 2)))
      (IC_Init reset out t')))
(rule ($\Rightarrow$ (and (GC reset gout ga gb ga' gb')
               (IC reset iout it it')
              (= ok (= iout gout)))
       (T reset ok ga gb it ga' gb' it')))
(rule ($\Rightarrow$ (and (GC_init reset gout ga' gb')
               (IC_init reset iout it')
              (= ok (= iout gout)))
      (T_Init reset ok ga' gb' it')))

(rule ($\Rightarrow$ (T_Init reset ok ga gb it)
    (M reset ok ga gb it)))

(rule ($\Rightarrow$ (and (M reset' ok' ga gb it)
           (T reset tok ga gb it ga' gb' it'))
        (M reset tok ga' gb' it')))
(rule ($\Rightarrow$ (and (M reset ok ga gb it)
              (not (= ok true)))
        Error)
(query Error)
\end{lustre}
\end{minipage}
\end{minipage}
\caption{Two counters example.}
\label{fig:two}
\end{figure}

Using the PDR engine of Z3~\cite{pdr} we are able to get the modular invariant
of the Horn encoding for the predicate $IC$, $GC$, $M$ and $T$. For example, for the predicates $IC\_Init, IC$ and $GC\_Init, GC$ we obtain the following invariants:
\[
\begin{array}{lccl}
IC\_Init(reset, out, time) & = & & (time=0) \wedge \neg(out) \\
IC(reset, out, time, time') & = & & (time < 3 \to time' \geq 0) \\
& & \wedge &(out \to time \leq 1 ) \\
& & \wedge & (time' \leq 0 \vee \neg(time \leq 3) \vee \neg(time \geq 3)) \\
& & & (\neg(time \geq 3) \vee time'\geq 0) \\
\hline
GC\_Init(reset, out, a, b) & = && \neg(out) \wedge \neg(a) \wedge \neg(b)\\
GC(reset, out, a, b, a', b') & = & \wedge & b' \leftrightarrow a \\
& &  \wedge & \neg a' \leftrightarrow b \\
& & \wedge & \neg out \leftrightarrow \neg a \vee a'
\end{array}
\]

\noindent
For example, for the node \texttt{intcounter}, denoted by the predicates $IC\_Init$ and $IC$, we obtain that the variable \textit{time} is bound in the interval [$0,3$].


\section{From monolithic to modular invariants}\label{sec:mono}
In the last section we illustrated how we compile a Lustre program in a modular
fashion and obtain a Horn clause representation. Such encoding allows to
generate modular invariants by exploiting tools based on property-directed
reachability such as Z3~\cite{pdr}. In this section, we describe a technique to
synthesize a modular invariants given a monolithic invariant. The latter is an invariant over an inlined version of the program, in which all the nodes (non-recurisve predicate) are inlined to the main Lustre node. Formally, it is defined as follows:

\begin{defi}[Monolithic invariant]
Let $L=[N_0,\dots,N_m]$ be a Lustre program and $H=\mathtt{lus2horn}(L)$ be the
set of Horn clauses defined in the previous section. Let $K = \mathtt{inline(H)}$ an
inlined version of $H$. That is all the non-recursive predicate are inlined by
resolution. The function $\mathtt{inline}(H)$, will generate a tuple $M=(I_{N_0}, T_{N_0})$, where $I_{N_0}$ and $T_{N_0}$ are the predicates for the initial states and the transition relation as defined in $(i)$ and $(ii)$ of Section~\ref{sec:modular}. Let $P_K$ be a safety property. Checking the property $P_K$ over $M$ as defined in $(v)$ of Section~\ref{sec:modular}, we obtain an invariant $\pi : M \rightarrow \mathcal{F}$, where $\mathcal{F}$ is a set of formulas valid in $M$. We call $\pi$ a monolithic invariant.
\end{defi} 

Given a monolithic invariant $\pi$ defined as above, we are interested in obtaining modular invariants. That is, given a modular encoding of the program we would like to reconstruct the modular invariant from $\pi$. The following theorem states that given a monolithic invariant we can obtain a modular invariant for a modularly defined Horn clauses.

\begin{thm}[]
Let $\pi : M\rightarrow \mathcal{F}$ be a monolithic invariant for $M=(I_{N_0}, T_{N_0})$ of an inlined Horn clause $K = \mathtt{inline}(H)$. Then, $\pi$ can be extended to a model $\pi'$ for the Horn clause $H$; where $H$ is a modular set of Horn clauses as defined in Section~\ref{sec:modular}.
\end{thm}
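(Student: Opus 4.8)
The plan is to exhibit the extension $\pi'$ explicitly and then check every Horn rule of $H$ against it. The clauses of $H$ split into two kinds: the \emph{definitional} rules $(i)$ and $(ii)$, which have the node predicates $T_{N_i}$ and $I_{N_i}$ as heads, and the \emph{top-level} rules $(iii)$--$(v)$, which constrain only $Main$ and $Error$. Since Lustre forbids circular node definitions (the footnote accompanying the definition of $State$), the call graph is a DAG, so the predicates $T_{N_i}, I_{N_i}$ are non-recursive and $\mathtt{inline}(H)$ resolves them away one at a time in a topological order, leaving the single recursive predicate $Main$ together with the resolved formulas $I_{N_0}, T_{N_0}$ collected in $M$.

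First I would define $\pi'$. On the recursive predicate I set $\pi'(Main) \defas \pi(M)$, and on the error predicate I set $\pi'(Error) \defas \ff$. On each node predicate I set $\pi'(T_{N_i}) \defas \ext{\phi(Trans_i)}$ and $\pi'(I_{N_i}) \defas \ext{\phi(Init_i)}$, i.e. the relation obtained by fully resolving that predicate (its exact least-model interpretation, with the sub-calls already resolved). Because the arguments of the inlined $T_{N_0}$ and $I_{N_0}$ are precisely the aggregated memories $State_{N_0}$ — which, by the definition of $State(N_0)$, collect with their $uid$ prefixes the memories of every sub-instance — the argument lists of the modular predicate $Main$ and of the monolithic object $M$ coincide, so $\pi(M)$ is directly usable as an interpretation of $Main$.

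Next I would verify the rules. The definitional rules $(i)$ and $(ii)$ hold by construction: assigning a non-recursive predicate its own resolved definition turns each such rule into the implication of a formula into (a disjunct of) itself. For the top-level rules, the key lemma is the \emph{correctness of inlining}: resolving a non-recursive predicate out of a Horn system is model-preserving, and a model of the resolved system extends to the original by reintroducing the eliminated predicate with its resolved definition. Applying this lemma inductively along the topological order of the call DAG shows that the relational meanings of $\pi'(I_{N_0})$ and $\pi'(T_{N_0})$ coincide with the inlined formulas $I_{N_0}$ and $T_{N_0}$ of $M$. Since $\pi$ is a monolithic invariant it already satisfies the inlined counterparts of $(iii)$--$(v)$ — it contains the inlined initial states, is closed under the inlined transition relation, and entails $P_H$ — so substituting these equal relations shows that $(iii)$, $(iv)$ and $(v)$ hold for $\pi'$ as well, whence $\pi'$ is a model of $H$.

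The main obstacle is the inlining-correctness lemma: showing that a single resolution step of a non-recursive predicate preserves models in both directions, and that the resolved top-node formulas genuinely equal the least-model composition of the modular node predicates. This is exactly where acyclicity of the call graph is needed, both to guarantee that resolution terminates and to ground the induction; the residual work is the routine bookkeeping matching the $uid$-prefixed components of $State_{N_0}$ against the memory variables occurring in each $\phi(Trans_i)$.
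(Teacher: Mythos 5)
Your proposal is correct, and it takes a genuinely different route from the paper. The paper never constructs $\pi'$ explicitly: its justification is operational. It keeps the monolithic invariant as a fixed, defined predicate $Mono$, adds the clauses $(vi)$ and $(vii)$ (initial states imply $Mono$; a $T_{N_0}$-step from a $Mono$-state to a non-$Mono$-state raises $Error$) on top of the modular clauses $(i)$/$(ii)$, and queries a PDR engine (Z3) for reachability of $Error$; the certificate of un-reachability returned by the solver \emph{is} the model $\pi'$, with synthesized interpretations for each $I_{N_i}$, $T_{N_i}$. You instead exhibit $\pi'$ by hand: least-model (fully resolved) interpretations for the node predicates, $\pi(M)$ for $Main$, $\mathsf{false}$ for $Error$, justified by the standard unfold/resolution lemma for non-recursive, DAG-shaped predicate definitions. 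Each approach buys something. Yours is an actual existence proof, and it supplies exactly what the paper leaves implicit: the reason $Error$ in $(vi)$/$(vii)$ must be unreachable --- equivalently, why the solver's query admits a certificate at all --- is precisely your observation that the least models of the node predicates compose to the inlined $I_{N_0}$, $T_{N_0}$, against which $\pi$ is inductive. The paper's encoding, on the other hand, is an algorithm: the solver may return node summaries strictly weaker than your exact least-model relations, which is what one wants when the invariants are meant to serve as small contracts; your canonical extension always exists but performs no abstraction. Two small remarks: the unfolding lemma you flag as the main obstacle is indeed the crux, but you only need one inclusion of it (least model contained in the inlined relation), since the node predicates occur only in antecedent positions in rules $(iii)$--$(v)$; and your bookkeeping point about the $uid$-prefixed components of $State_{N_0}$ is exactly why the signatures of $Main$ and $Mono$ agree, a fact the paper also uses silently.
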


Given a modular Horn clauses $H$ as defined by the rules $(i)$ and $(ii)$ in Section~\ref{sec:modular}, where $I_{N_0}$ and $T_{N_0}$ are the predicates for the initial and transition relation of the top node, we encode the following Horn rules in order to get a modular invariants:

\[
\begin{array}{crl}
(vi) & (\mathtt{rule}\ (\Rightarrow  \quad\quad\quad &  (I_{N_0} \In_{N_0}\ \Ou_{N_0}\ State_{N_0})\\
   & & (Mono\ \In_{N_0}\ \Ou_{N_0}\ State_{N_0})))\\
(vii) & (\mathtt{rule}\ (\Rightarrow  (\mathbf{and}\ &  (T_{N_0} \In'_{N_0}\
  \Ou'_{N_0}\ State_{N_0}\ State_{N_0}')\\
  & & (Mono\ \In_{N_0}\ \Ou_{N_0}\ State_{N_0})\\ 
  & & (\mathbf{not}\ (Mono\ \In'_{N_0}\ \Ou'_{N_0}\ State_{N_0}')))\\
  & Error))
\end{array}
\]

where $Mono$ is the predicate representing the monolithic invariant. Rule $(vi)$
encode the rule for the initial states, while rule $(vii)$ encode the reachability of the transition relation, where $Error$ is the predicate that marks the error state. By checking the reachability of the $Error$ state we can
obtain, as expected,  a certificate of un-reachability of it, producing a modular invariant for the predicates of in $H$.

\subsection{Example (cont.)}
Continuing the example from previous section, let \texttt{MONO} be the monolithic invariant for the inlined version of the \texttt{two\_counters}, defined as follows:

\begin{lustre}
(define-fun MONO ((reset Bool)(ok Bool) (ga Bool) (gb Bool) (it Int)) Bool
  (let ((tmp (not (or (not (<= it 0)) (not (>= it 0))))))
    (and ok
         (or tmp ga gb reset)
         (or (<= it 2) (not ga) (not reset))
         (<= it 3)
         (or (>= it 3) (not gb) ga)
         (or (>= it 2) (not gb) (not ga)))))
\end{lustre}

Given the modular definition of the Horn rules for $IC, GC$ and $T$ as defined in the sub-section~\ref{ssec:example}, we encode the reachability challenge in the following way:

\begin{lustre}
(rule (=> (and (T reset tok ga gb it ga' gb' it'))
               (MONO reset' ok ga gb it)
               (not (MONO reset' tok ga' gb' it')))
               Error))        
(query Error)
\end{lustre}

Z3 produces a certificate for such queries over the predicates $IC, GC$ and $T$.





\section{Minimization of modular interface}\label{sec:min}

Given a Horn clause $H$ of the form $(\mathtt{rule}\ (\Rightarrow\ Body\ B))$,
where $Body$ is a conjunction of expressions and $B$ is a predicate, we can get an invariant $\pi
: B \mapsto \mathcal{F}$, which is of the form:

$$
B(\In, \Ou, State, State') = \bigwedge f 
$$

where $f \in \mathcal{F}$, and $\mathcal{F}$ is a set of formulas which makes
the predicate $B$ valid. Such invariant could be obtained using techniques from
property-based reachability~\cite{pdr} or other techniques for invariant
generators, e.g., template-based~\cite{KahGT-NFM-11} or abstract
interpretation-based~\cite{GarocheKT13}. In a nutshell, $\pi$ represents a set
of formulae which prescribe the behavior of that particular component. In other
words, it represents an interface (or contract) of the component. We are
interested in obtaining a smaller interface (or contract). Specifically, we are
interested in minimizing the set of state variables for $\pi$. This means,
minimizing the \textit{bits} representing the state variables. For example, if
the variables in $State$ are represented by bit vectors, we are interested in
minimizing the number of bits. More generally, we define the notion of 
\textit{ranks of state variables} as follows:

\begin{defi}[Rank of state variables]\label{def:bits}
  Let $\mathcal{V}$ be a set of state variables. We define the number of bits as
  a function $\mathtt{rank}: \mathcal P (\mathcal{V}) \rightarrow \mathbb{N} x
  \mathbb{N}$, from the set of state variables $V$ to a pair $(\bar{I},\bar{B})$,
  where $\bar{I}$ is the number of integers and $\bar{B}$ is the number of bits representing it.

\[
\begin{array}{rcl}
\mathtt{rank}: \mathcal P (\mathcal V) & \rightarrow& \mathbb N \times \mathbb N\\
 \{ v\} & \mapsto & \left\{
 \begin{array}{ll}(0,1) & \textrm{when}\ type(v) = Bool\\
 (1,0) & \textrm{when}\ type(v) = Int\\
 (0,n) & \textrm{when}\ type(v) = BitVector^n
\end{array}\right.\\
V & \mapsto & \sum_{v \in V} \mathtt{rank}(v)
\end{array}
\]
\end{defi}

The following are some examples of ranks of state variables:  
$$
\begin{array}{cl}
(*) & 
\begin{array}{|l|l|}
\hline
\mbox{ State Variables } & \mbox{ Rank }\\
\hline
P = \{v:Int, w:Int, b:Bool\} & \mathtt{rank}(P)=(2,1)\\
\hline
Q = \{b_1:Bool, b_2:Bool\} & \mathtt{rank}(Q)=(0,2)\\
\hline
R = \{w:BitVector^{1000}\} & \mathtt{rank}(R)=(0,1000)\\
\hline
\end{array}
\end{array}
$$

Given an invariant we are interested in finding another invariant which is
smaller, c.f. less rank of state variables. Formally, we define smaller invariant
using lexicographic order as follows:

\begin{defi}[Lexicographic order]\label{def:smaller}
  Let $\mathcal{V}$ be a set of state variables. Let $\mathtt{rank}: \mathcal{V}
  \rightarrow \mathbb{N} x \mathbb{N}$ be the function defined in
  Def.~\ref{def:bits}.  Let $State_H, State_K \subseteq\mathcal{V}$ be two sets
  of state variables. There is a lexicographic order $<_{State}$ over the set
  $\mathcal{V}$ such that
\[
State_K <_{State} State_H \triangleq 
\mathtt{rank}(State_K) <_{\mathbb N^2} \mathtt{rank}(State_H)
\]
where $\leq_{\mathbb N^2}$ denotes the usual lexicographic orderings of pairs of the
classical order: $(a,b) \leq_{\mathbb N^2} (c,d)\ \textrm{iff}\\ a
< c \lor (a = b \land b < d)$. The definition
is lifted to invariants in the following way:
\[
K(\In, \Ou, State_K, State_K') <_{State} H(\In, \Ou, State_H, State_H')
\textrm{ iff }
State_K <_{State} State_H.
\]
\end{defi}
For the predicates in example $(*)$ we have the ordering $Q <_{State} R <_{State} P$.
%
We now sketch our idea of minimization of invariants.
Let $H$ be the set of Horn clauses as defined in~Section~\ref{sec:modular}, for which we have obtained a set of modular invariants $\pi_0,\dots,\pi_m$ for the Horn rules $H_0, \dots, H_m$ of $H$.  Our aim is to generate a set of invariants $\pi'_0,\dots,\pi'_n$ such that the predicates of the Horn rules $H_0, \dots H_m$ have been minimized. That is, we generate predicates $Q_0,\dots,Q_m$ such that $Q_0 <_{State} P_0, \dots, Q_m <_{State} P_m$ where $P_0,\dots, P_m$ are the predicates for the Horn rules $H_0,\dots,H_m$.

In a nutshell, our approach of minimization is an iterative one which is based on a CEGAR~\cite{cegar} type technique. We start by abstracting the Horn rules by removing state variables from the signature of their predicates. Let $H$ be the set of Horn rules and $H^{\sharp}$ the abstracted one.  We check whether the safety property $P$ is valid in  $H^{\sharp}$ (c.f. $(v)$ in Section~\ref{sec:modular}). If the result is unsatisfiable then we are done, i.e., we have found certificate of unsatisfiability for smaller predicates, hence smaller invariants. If the result is satisfiable, it means we have a \textit{spurious counterexamples}, traces for $H^{\sharp}$ that falsify the property $P$ but are not legal traces of $H$. In this case we need to refine $H^{\sharp}$. 

Let $t^{\sharp}$ be a trace of $H^{\sharp}$ that falsifies the property $P$. However, the corresponding trace $t$ of $H$ is not legal, hence is unsatisfiable. This means that there is a constraint $Z$ in $t \backslash t^{\sharp}$. Therefore, we refine $H^{\sharp}$ by finding the smallest subset of $Z$ that makes $t^{\sharp}$ unsatisfiable.

\section{Conclusion}\label{sec:concl}
In this paper, we have described two techniques to synthesis modular
invariants for synchronous code encoded as Horn clauses. Modular invariants are
very useful for different aspects of analysis, synthesis, testing and program
transformation. We have described two techniques to generate modular invariants
for code written in the synchronous dataflow language Lustre. The first
technique directly encodes the synchronous code in a modular fashion. While in
the second technique, we synthesize modular invariants starting from a monolithic
invariant. Both techniques take advantage of analysis techniques based on
property-directed reachability. Both techniques have been implemented in a tool
that targets the verification of safety properties specified in Lustre. We have
also sketched a technique for minimizing invariants following a CEGAR-like
approach. In the future, we plan to fully work the details of the minimization
process and implement such techniques.  
\bibliographystyle{eptcs}
\bibliography{biblio}

\end{document}